\documentclass[twocolumn]{article}

\usepackage[english]{babel}
\usepackage[utf8]{inputenc}
\usepackage[T1]{fontenc}
\usepackage{microtype}
\ifx\directlua\undefined
  \usepackage[T1]{fontenc}
  \usepackage{lmodern}
\else
  \usepackage{fontspec}
\fi

\usepackage[hidelinks]{hyperref}
\usepackage{braket}
\usepackage[inline]{enumitem}
\usepackage[table]{xcolor}
\usepackage{tabularx}

\usepackage{amsthm}

\usepackage[noabbrev, capitalise]{cleveref}
\usepackage{caption}
\usepackage{subcaption}
\usepackage{listings}
\lstset{basicstyle = \ttfamily}

\usepackage{tikz}
\usetikzlibrary{graphs}
\usetikzlibrary{positioning}
\usetikzlibrary{arrows.meta}
\usetikzlibrary{fit}

\usepackage{biblatex}
\usepackage{csquotes}

\usepackage[hyphenbreaks]{breakurl}

\makeatletter
\newcommand{\blockid}[1]{\color{black}{\sffamily #1}\hspace{0.5em}}
\lstnewenvironment{basicblock}[2][]{
  \lstset{
    language   = C,
    basicstyle = \ttfamily\small,
    frame      = single,
    framesep   = 0mm,
    framerule  = 0pt,
    #1
  }%
  \def\code@arg{#2}%
  \setbox0\hbox\bgroup%
}
{%
  \egroup\usebox0
  \raisebox{\dimexpr-\dp0+0.25\ht\strutbox\relax}{
    \makebox[\dimexpr-\wd0+\lst@linewidth\relax][r]{
      \blockid{\code@arg}%
    }%
  }%
}
\makeatother

\newtheorem*{proposition}{Proposition}
\newtheorem*{observation}{Observation}

\addbibresource{references.bib}

\author{Jørgen Kvalsvik \\ \href{mailto:j@lambda.is}{j@lambda.is}}
\title{Prime Path Coverage in the GNU Compiler Collection}
\date{March 27, 2025}

\begin{document}
  \maketitle
  \begin{abstract}
  We describe the implementation of the prime path coverage support introduced
  the GNU Compiler Collection 15, a structural coverage metric that focuses
  on paths of execution through the program. Prime path coverage
  strikes a good balance between the number of tests and coverage, and
  requires that loops are taken, taken more than once, and skipped.
  We show that prime path coverage subsumes modified
  condition/decision coverage (MC/DC). We improve on the current
  state-of-the-art algorithms for enumerating prime paths by using a
  suffix tree for efficient pruning of duplicated and redundant
  subpaths, reducing it to $O(n^2m)$ from $O(n^2m^2)$, where $n$ is
  the length of the longest path and $m$ is the number of candidate
  paths.  We can efficiently track candidate paths using a few bitwise
  operations based on a compact representation of the indices of
  the ordered prime paths.  By analyzing the control flow graph, GCC
  can observe and instrument paths in a language-agnostic manner, and
  accurately report what code must be run in what order to achieve
  coverage.
\end{abstract}

\section{Introduction}
\label{sec:introduction}
A major limitation in functional testing and dynamic software analysis
is the \emph{path coverage problem}~\cite{ammann2016, larson03,
engler03}, i.e.\ problems can only be detected in executed paths.
Fuzzing~\cite{zalewski-afl} has proven to be an effective technique
for exploring paths and detecting bugs, and there are
algorithms~\cite{li12} that try to generate minimal inputs
for coverage.  There have been proposed hardware extensions
for dynamically expanding into untested paths~\cite{pathexpander07},
and automatic high coverage test generation~\cite{cadar2008}.
Structural coverage analysis on its own remains a useful tool as it
provides increased visibility into the code exercised when
testing~\cite{hayhurst2001}, an objective
exit-criterion~\cite{chilenski2001} for manual testing and test
writing, and is powerful for checking assumptions.  Ball and
Larus~\cite{ball1996} presented efficient path profiling on SPARC
systems in the 1990s, but the industry has never widely adopted path
coverage.  They note that the cost of path profiling can be comparable
to block- and edge profiling while providing far more information, and
can form a basis for profile-guided optimizations.  In this paper we
describe our implementation of the prime path coverage support in the
GNU Compiler Collection version 15.

While Li, Praphamontripong, and Offutt~\cite{li2009} found the
cost/benefit of prime path coverage worse than edge-pair and
definition-use, it is worth noting that prime path coverage still
found more defects than the other coverage criterion used in the
study.  Durelli et al.~\cite{durelli2018} also found prime path
coverage more effective at finding defects than edge-pair coverage at
a moderate increase in test cases.  Prime path coverage \emph{almost}
subsumes edge-pair coverage~\cite{ammann2016, rechtberger22};
self-edges require test paths with repeated vertices which are not
simple.  Since prime paths almost subsume edge-pairs it is reasonable
that it is sensitive to the same problems, which seems supported by Li
et al.~\cite{li2009}.  Stronger coverage criteria have worse
cost/benefit ratio in terms of number-of-bugs, but are more suited to
detect \emph{deeper} defects.

John Regehr~\cite{regehr872} demonstrates a bug in
\cref{fig:path-sensitive-cond} that branch coverage, and even
edge-pair coverage, will not detect, which he names \emph{path
sensitive conditionals}. These subtle interactions between
conditionals is not uncommon in real-world programs.  Regehr also
notes that due to the \emph{path explosion}
problem~\cite{boonstoppel2008} it is infeasible to achieve 100\% path
coverage. Even full path coverage is not, in itself, sufficient to
find \emph{all} bugs in real programs~\cite{regehr386}.

\begin{figure}
  \begin{lstlisting}[language = C, basicstyle = \ttfamily\small]
int silly (int a) {
  int *p = &x;
  if (a > 10) p = NULL;
  if (a < 20) return *p;
  return 0;
}
  \end{lstlisting}
  \caption{
    Assume \lstinline{x} is a global int.  Testing with 5 and 25
    satisfies branch coverage, but does not trigger the bug (dereferencing
    \lstinline{*p} when \lstinline{p} is \lstinline{NULL}). Prime path
    coverage would require both decisions to be \lstinline{true}. The
    example is from Regehr~\cite{regehr872}.
  }
  \label{fig:path-sensitive-cond}
\end{figure}

Even considering all of the limitations noted above, prime path
coverage remains a powerful tool for software testing; it is a strong
criterion that is simple to describe and understand, while
simultaneously subsuming statement, branch, definition-use, and mostly
edge-pair coverage~\cite{rechtberger22, ammann2016}, which allows
testing and development to focus on a single criterion.  Path
coverage, even limited forms such as simple path- and prime path
coverage, is also able to observe \emph{data dependent} infeasible
paths, i.e.\ paths that cannot be taken due to contradictions or
dependent values, defensive guards, and similar constructs.  This
makes path coverage a useful measurement even when not aiming for
full coverage.  While the main objective of testing should be to
verify that the program complies with the functional requirements,
prime path provides strong evidence that the program meets the
requirements \emph{and} that the requirements are complete.

\section{Background}
\label{sec:background}
The tooling and algorithms in this space tend to work on a finite
state machine (FSM) or graph representation of the
programs~\cite{rechtberger22, ammann2016}.  A control flow graph (CFG)
is a graph representation of computation and control flow for a
program module, e.g.\ a function in C.  In the CFG, the vertices, or
\emph{basic blocks}, represent uninterruptible streams of computation
while the edges characterize the control flow between the basic
blocks.  The CFG is the connected and possibly cyclic directed graph
$G = (V, E, v_0, v_x)$ where $V$ is a non-empty finite set of
vertices, $E \subseteq \set{(u,v) \mid u \in V, v \in V}$, and $v_0$,
$v_e$ are the entry and exit vertices so that for all vertices $v \in
V$ there is a path $(v_0, \ldots, v)$ and $(v, \ldots, v_e)$.  The
entry- and exit vertices do no represent any computation, $v_0$ has no
incoming edges, and $v_e$ has no outgoing edges.

A \emph{simple path} is a sequences of vertices $(v_1, \ldots, v_k)$
where all vertices are distinct.  A \emph{simple cycle} is a sequence
of vertices $(v_1, \ldots, v_k)$ where $v_1 = v_k$ and all other
vertices are distinct. Let $P$ be simple paths, $C$ simple cycles, and
$R = P \cup C$; \emph{prime paths} are the maximal objects of $R$.
A prime path is \emph{covered} if its vertices were visited in
sequence during testing.  \cref{fig:simple-prime-path} shows the
simple- and prime paths for a small function with no cycles.  Counting
simple paths is known to be \#P-complete~\cite{valiant1979}, so
enumerating them is at least as hard, and finding prime paths from the
set of simple paths quickly becomes impractical.  Note that while
there are fewer prime paths than simple paths, even small graphs may
have a large number of prime paths~\cite{kaminski2010}.  Prime path
coverage strikes a good balance between defect sensitivity and the
number of required test paths.  Notably, prime path coverage require
loops to be taken, taken more than once, and
skipped~\cite{ammann2016}.

\begin{figure}
  \centering
  \begin{subfigure}{0.33\columnwidth}
    \centering
    \begin{tikzpicture}[>= latex,
      every node/.style = {draw, minimum size = 16pt, circle}]
          \node (a) at (0.5, 5) {1};
          \node (b) at (0.5, 4) {2};
          \node (c) at (0, 3)   {3};
          \node (d) at (1, 3)   {4};
          \node (e) at (0.5, 2) {5};
          \node (f) at (0, 1)   {6};
          \node (g) at (1, 1)   {7};
          \node (h) at (0.5, 0) {8};

          \path [->] (a) edge (b);
          \path [->] (b) edge (c);
          \path [->] (b) edge (d);
          \path [->] (c) edge (e);
          \path [->] (d) edge (e);
          \path [->] (e) edge (f);
          \path [->] (e) edge (g);
          \path [->] (f) edge (h);
          \path [->] (g) edge (h);
    \end{tikzpicture}
  \end{subfigure}%
  \hfill
  \begin{subfigure}{0.66\columnwidth}
    \begin{description}
      \item [Simple paths]
        \begin{flushleft}
          [1\,2] [1\,2\,3] [1\,2\,3\,5] [1\,2\,3\,5\,6]
          [1\,2\,3\,5\,6\,8] [1\,2\,3\,5\,7] [1\,2\,3\,5\,7\,8]
          [1\,2\,4] [1\,2\,4\,5] [1\,2\,4\,5\,6] [1\,2\,4\,5\,6\,8]
          [1\,2\,4\,5\,7] [1\,2\,4\,5\,7\,8] [2\,3] [2\,3\,5]
          [2\,3\,5\,6] [2\,3\,5\,6\,8] [2\,3\,5\,7] [2\,3\,5\,7\,8]
          [2\,4] [2\,4\,5] [2\,4\,5\,6] [2\,4\,5\,6\,8] [2\,4\,5\,7]
          [2\,4\,5\,7\,8] [3\,5] [3\,5\,6] [3\,5\,6\,8] [3\,5\,7] [3\,5\,7\,8]
          [4\,5] [4\,5\,6] [4\,5\,6\,8] [4\,5\,7] [4\,5\,7\,8] [5\,6]
          [5\,6\,8] [5\,7] [5\,7\,8] [6\,8] [7\,8]
        \end{flushleft}
      \item [Prime paths]
        \begin{flushleft}
          [1\,2\,3\,5\,6\,8]
          [1\,2\,3\,5\,7\,8]
          [1\,2\,4\,5\,6\,8]
          [1\,2\,4\,5\,7\,8]
        \end{flushleft}
    \end{description}
  \end{subfigure}

  \caption{The 41 simple paths and 4 prime paths for the CFG for two sequential
    decisions. Executing the prime paths would cover all simple paths.}
  \label{fig:simple-prime-path}
\end{figure}
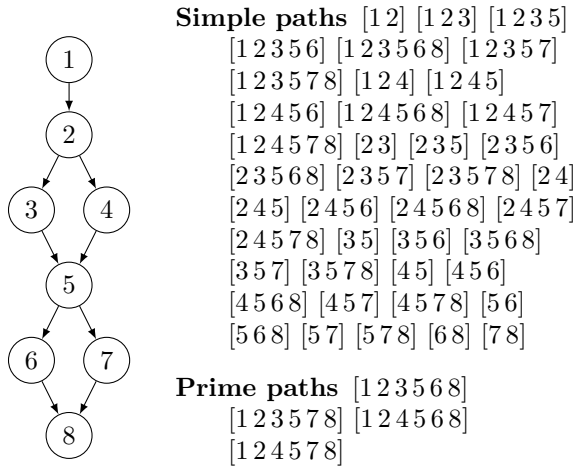

A coverage criterion $C_1$ \emph{subsumes} $C_2$ if satisfying $C_1$
also guarantees satisfying $C_2$~\cite{ammann2016}.  For example,
\emph{vertex coverage}, or node coverage, is the criterion that each
vertex in the CFG shall be visited at least once, and \emph{edge
coverage} is the criterion that each \emph{edge} shall be taken at
least once.  Edge coverage subsumes vertex coverage because taking
every edge in a connected graph guarantees visiting every vertex at
least once.  Prime path coverage subsumes most commonly used coverage
metrics; statement, branch, condition, decision, node, edge, and
definition-use~\cite{rechtberger22, durelli2018}.

\section{Prime path coverage subsumes MC/DC}
\label{sec:ppc-subsumes-mcdc}
The unique property of MC/DC is the independence
criterion~\cite{hayhurst2001}, which states that each condition must
be shown to take on both true and false while independently affecting the
decision's outcome.  A condition can independently affect the outcome
if changing it while keeping the other conditions fixed also changes
the outcome.  Because the independence is shown from the
\emph{combination} of inputs, vertex coverage of a Boolean expression
is not enough to satisfy MC/DC.  A vertex can be visited and still not
contribute towards MC/DC, and it is not obvious that it is subsumed by
prime path coverage.

\begin{proposition}
  Every test vector required for MC/DC is tested with prime path
  coverage.
\end{proposition}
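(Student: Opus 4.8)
The plan is to reduce the claim to the structural fact, already noted above, that prime paths are the maximal elements of $R = P \cup C$. First I would fix how a decision appears in the control flow graph: under short-circuit evaluation a decision over conditions $c_1,\dots,c_k$ compiles to a sub-DAG $D$ with a single entry and two outcome vertices --- the decision evaluating \emph{true} and \emph{false} --- in which each $c_i$ is a branch vertex with a true- and a false-successor, and every path from the entry of $D$ to an outcome vertex visits each condition vertex at most once. Consequently each such entry-to-exit path is a simple path of the CFG, and it corresponds precisely to one MC/DC test vector: the conditions it passes through are assigned the truth values labelling the edges it takes, the conditions it skips are masked, and the outcome vertex it reaches is the decision's value. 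Conversely, every test vector determines exactly one such path.

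Next I would translate the independence criterion into this language. For condition $c_i$, MC/DC demands a pair of test vectors that agree on every condition other than $c_i$, disagree on $c_i$, and drive the decision to opposite outcomes; the remaining obligation --- that each condition takes both values --- is then implied. By the correspondence above, each member of the pair is a simple path through $D$, hence a simple path of the whole CFG. Now recall that every simple path of a finite CFG is a subpath of some prime path: the family of simple paths and simple cycles that contain a given simple path $s$ is nonempty and finite, so it has a subpath-maximal member, and maximality there forces maximality in $R$, i.e.\ a prime path. Applying this to the two paths of the independence pair, each is a subpath of a prime path; any suite attaining prime path coverage runs a test path touring that prime path and therefore visits, in order, the vertices of the induced path, which by the first step is exactly exercising that test vector. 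Since this holds for every condition of every decision, every test vector prescribed by an MC/DC-adequate suite is exercised under prime path coverage.

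The graph-theoretic core is short; the work --- and the main obstacle --- is making the first step faithful to the control flow graphs GCC actually builds. I have to argue that a decision really does compile to a DAG with the stated one-visit-per-condition property, so that its entry-to-exit paths are genuinely simple; that duplicated operands, side effects, or lazily evaluated sub-expressions cannot create a cycle through $D$ or otherwise break this; and that masking of short-circuited conditions is the correct notion of ``the same test vector'' for MC/DC, so that reproducing the sub-path counts as testing the vector itself rather than a merely equivalent one. A smaller point to dispatch carefully is that when the decision is nested inside a loop the simple sub-path through $D$ still extends to a prime path of the enclosing graph --- it does, since we need not iterate the surrounding loop --- and that the tour relation underlying prime path coverage tolerates the extra surrounding context that the prime path carries around $D$.
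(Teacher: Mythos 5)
Your proposal is correct and takes essentially the same route as the paper: model the decision as an acyclic diagram whose entry-to-exit paths correspond to the MC/DC test vectors, then conclude via the fact that every simple path is a subpath of a prime path, which prime path coverage must tour. You work with the short-circuit sub-DAG of the CFG rather than the paper's canonical BDD and you spell out the proof of the subpath-of-a-prime-path observation (which the paper only states), but the core argument is identical.
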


\begin{observation}
  Every simple path is a subpath of a prime path.
\end{observation}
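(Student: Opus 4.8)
The plan is to view $R = P \cup C$ as a finite partially ordered set under the subpath relation and to read the Observation as the statement that every element of $R$ — in particular every simple path — lies below a maximal one. Following the usual convention I take $p$ to be a \emph{subpath} of $q$ when $p$ occurs as a block of consecutive vertices of $q$, so that in particular every path is a subpath of itself, and I use that a prime path is by definition a maximal element of $R$ for this relation. The first step is to note that the subpath relation is a partial order on $R$: it is reflexive and transitive by inspection, and antisymmetric because a block of consecutive vertices of $q$ never has more vertices than $q$, so $p \preceq q \preceq p$ forces $p$ and $q$ to have the same number of vertices and hence to be equal.

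Next I would record that $R$ is finite. Since $V$ is finite there are only finitely many vertex sequences with all vertices distinct, and only finitely many that in addition repeat their first vertex at the end; every element of $R$ is one of these and has at most $|V|+1$ vertices. Fixing a simple path $p$, it follows that the set $S = \{\, r \in R : p \text{ is a subpath of } r \,\}$ is finite, and it is non-empty because $p \in P \subseteq R$ and $p$ is a subpath of itself, so $p \in S$.

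Then I would pick a $q \in S$ with the greatest number of vertices and argue that $q$ is prime. If it were not, $q$ would be a proper subpath of some $r \in R$, so $r$ would have strictly more vertices than $q$; but $p$ is a subpath of $q$ and $q$ is a subpath of $r$, hence by transitivity $p$ is a subpath of $r$, i.e.\ $r \in S$, contradicting the maximality of $|q|$ among elements of $S$. Therefore $q$ is a maximal element of $R$ — a prime path — with $p$ as a subpath, which is exactly the claim.

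I do not expect a real obstacle here: the argument is essentially the finite-poset fact that every element sits below a maximal one, and the only point needing care is the convention that the subpath relation is non-strict, so that a simple path which is already maximal counts as a subpath of the prime path consisting of itself. A more constructive alternative is to extend $p$ one vertex at a time — appending an out-neighbour of the last vertex, prepending an in-neighbour of the first, or closing a simple cycle when the last vertex has an edge back to the first — and to check that when none of these moves is available $p$ is already prime; this works but needs a short case analysis on where a hypothetical longer container of $p$ could differ from it, which the maximum-length argument sidesteps.
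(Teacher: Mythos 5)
Your proposal is correct, and it is genuinely different from what the paper does, because the paper never argues the Observation at all: it is stated bare, and the proof environment that follows is really a proof of the Proposition, which proceeds via BDDs (MC/DC test vectors correspond to paths in the acyclic BDD, these paths are simple, and prime path coverage "requires taking all maximal simple paths"). That last step silently uses exactly the Observation — that covering the maximal elements of $R = P \cup C$ covers every simple path — so your argument supplies the justification the paper leaves implicit. Your route is the standard finite-poset fact: order $R$ by the non-strict subpath (consecutive-block) relation, note that $R$ is finite because $V$ is, that the set of elements of $R$ containing a given simple path $p$ is non-empty (it contains $p$), and that a container $q$ of maximum length must be maximal in $R$, since a proper superpath would be strictly longer and would still contain $p$. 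This is sound, and you are right that the only point needing care is the non-strict convention, so that a simple path which is already maximal is a subpath of the prime path consisting of itself; antisymmetry, which you verify, is not actually needed for the existence argument. What each approach buys: the paper's treatment keeps the MC/DC subsumption proof short by treating the Observation as self-evident, while your argument makes explicit where finiteness of the CFG enters (without it maximal elements need not exist), and it is stated purely at the level of the poset $(R, \preceq)$, so it applies verbatim to any finite graph rather than only to the BDDs used in the Proposition's proof. Your sketched constructive alternative (extending $p$ vertex by vertex) would also work but, as you note, requires a case analysis that the maximum-length argument avoids.
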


\begin{proof}
  Any Boolean function has a canonical representation as a reduced
  ordered boolean decision diagram (BDD)~\cite{bryant92}, which is a
  rooted acyclic graph.  Each combination of inputs map to a path in
  the BDD, which and MC/DC is achieved by taking a subset of the paths
  in the BDD.  The BDD is acyclic so all paths are simple, and since
  prime path coverage requires taking all maximal simple paths it
  subsumes MC/DC.
\end{proof}

Which subset of paths MC/DC requires depends on the kind of MC/DC.  For
unique-cause MC/DC it is a subset where between pairs of inputs the
final vertex changes when a only a single input condition differs.
For masking MC/DC it is a set of paths that would visit all vertices
while visits are filtered with a masking
function~\cite{kvalsvikmcdc2025}.  The inverse is not true; MC/DC for
a Boolean expression does not imply prime path coverage for the BDD,
as shown in the counter example in \cref{fig:mcdc-subsumed}.

\begin{figure}
  \centering
  \begin{minipage}{0.4\columnwidth}
  \begin{subfigure}{\columnwidth}
    \centering
    \begin{tikzpicture}[>= latex]
      \node [draw, minimum size = 16pt, circle] (a) at (0.5, 4) {$1$};
      \node [draw, minimum size = 16pt, circle] (b) at (1, 3)   {$2$};
      \node [draw, minimum size = 16pt, circle] (c) at (1, 2)   {$3$};
      \node [draw, minimum size = 16pt, circle] (d) at (2, 1)   {$4$};
      \node [draw, minimum size = 16pt, circle, double] (e) at (1, 0)   {$5$};
      \node [draw, minimum size = 16pt, circle, double] (f) at (2, 0)   {$6$};

      \path [->] (a) edge (b);
      \path [->] (a) edge [bend right = 15] (e);
      \path [->] (b) edge (c);
      \path [->] (b) edge [bend left = 15] (d);
      \path [->] (c) edge (e);
      \path [->] (c) edge (d);
      \path [->] (d) edge (e);
      \path [->] (d) edge (f);
    \end{tikzpicture}
    \caption{CFG}
    \label{fig:mcdc-subsumed:cfg}
  \end{subfigure}
  \begin{subfigure}[b]{\columnwidth}
    \centering
    \begin{tabular}{l l}
      $P_1$ & 1 2 3 4 5 \\
      $P_2$ & 1 2 3 4 6 \\
      $P_3$ & 1 2 3 5 \\
      $P_4$ & 1 2 4 5 \\
      $P_5$ & 1 2 4 6 \\
      $P_6$ & 1 5 \\
    \end{tabular}
    \caption{Prime paths}
    \label{fig:mcdc-subsumed:paths}
  \end{subfigure}
  \end{minipage}%
  \begin{subfigure}[t]{0.6\columnwidth}
    \centering
    \rowcolors{1}{}{black!5}
    \begin{tabular}{r cccc c}
         & 1 & 2 & 3 & 4 & Path  \\
      1  & 0 & 0 & 0 & 0 & $P_5$ \\
      2  & 0 & 0 & 0 & 1 & $P_4$ \\
      3  & 0 & 0 & 1 & 0 & $P_5$ \\
      4  & 0 & 0 & 1 & 1 & $P_4$ \\
      5  & 0 & 1 & 0 & 0 & $P_2$ \\
      6  & 0 & 1 & 0 & 1 & $P_1$ \\
      7  & 0 & 1 & 1 & 0 & $P_3$ \\
      8  & 0 & 1 & 1 & 1 & $P_3$ \\
      9  & 1 & 0 & 0 & 0 & $P_6$ \\
      10 & 1 & 0 & 0 & 1 & $P_6$ \\
      11 & 1 & 0 & 1 & 0 & $P_6$ \\
      12 & 1 & 0 & 1 & 1 & $P_6$ \\
      13 & 1 & 1 & 0 & 0 & $P_6$ \\
      14 & 1 & 1 & 0 & 1 & $P_6$ \\
      15 & 1 & 1 & 1 & 0 & $P_6$ \\
      16 & 1 & 1 & 1 & 1 & $P_6$ \\
    \end{tabular}
    \caption{Truth table}
    \label{fig:mcdc-subsumed:inputs}
  \end{subfigure}
  \caption{BDD for the Boolean expression \texttt{a or (b and c) or
    d}.  The double circle vertices 5 and 6 are the true and false
    outcome, respectively.  Rows 1, 2, 5, 7, 9 (paths $P_5$, $P_4$,
    $P_2$, $P_3$, $P_6$) would achieve MC/DC, but row 6 (path $P_1$)
    would have to be included for prime path coverage.
  }
  \label{fig:mcdc-subsumed}
\end{figure}
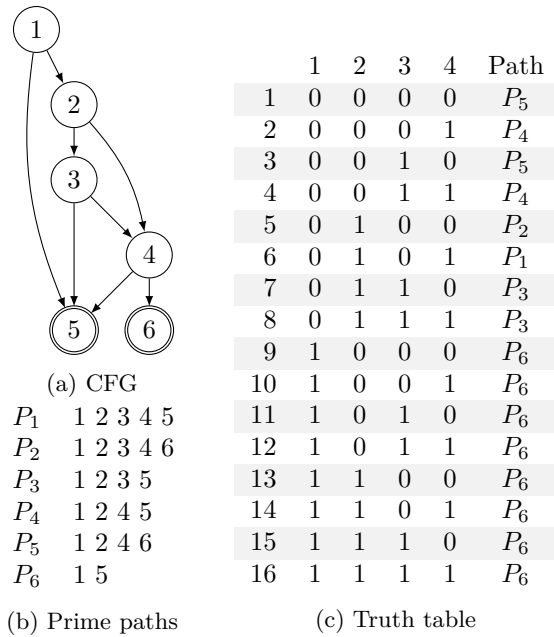

  \section{Enumerating prime paths}
\label{sec:algorithm}
In this section we describe the algorithm in GCC that enumerates the prime
paths.  An efficient algorithm is important as the number of paths
grows very fast with program complexity;  for example, a sequence of
$n$ if-then-else with no nesting, as in
\cref{fig:simple-prime-path}, has $2^n$ prime paths as every
subsequent if-then-else would add another two prime paths to
each of the $2^{n-1}$ prime paths up to it, and the relatively short
function in \cref{fig:binary-search} has 17 prime paths.

We build on two algorithms for prime path enumeration, described by Ammann
and Offutt, and Fazli and Afsharchi.  The algorithm by Ammann and
Offutt~\cite{ammann2016} finds the maximal simple paths and simple
cycles by starting with all single-vertex candidate paths and
progressively extending each path with successors while maintaining
the simple path- and cycle properties, When no more paths can be
extended, the set of paths is pruned by removing all paths that also
appear as subpaths, leaving only the paths that satisfy the
\emph{prime} criterion.  Fazli and Afsharchi's compositional
method~\cite{fazli2019} improve on this for many real-world programs.
The high level steps of their algorithm are to
\begin{enumerate*}[
  label = (\arabic*),
  itemjoin = {{; }},
  itemjoin* = {{, and }}]
  \item compute the component graph of the CFG
  \item generate the prime paths for each component and the component graph
  \item extract different intermediate paths from the components
  \item merge intermediate paths to form the prime paths of the CFG.
  \end{enumerate*}
Fazli and Ebnenasir~\cite{fazli2022} propose a way to use the GPU to
accelerate this design.

\begin{figure}
  \centering
  \begin{subfigure}{\columnwidth}
    \begin{basicblock}{}
int search (int a[], int len, int key) {
    \end{basicblock}
    \begin{basicblock}{1}
  int low = 0;
  int high = len - 1;
    \end{basicblock}
    \begin{basicblock}[backgroundcolor = \color{black!5}]{2}
  while (low <= high) {
    \end{basicblock}
    \begin{basicblock}{3}
    int mid = (low + high) / 2;
    if (a[mid] < key)
    \end{basicblock}
    \begin{basicblock}[backgroundcolor = \color{black!5}]{5}
      low = mid + 1;
    \end{basicblock}
    \begin{basicblock}{6}
    else if (a[mid] > key)
    \end{basicblock}
    \begin{basicblock}[backgroundcolor = \color{black!5}]{7}
      high = mid - 1;
    \end{basicblock}
    \begin{basicblock}{8}
    else
      return mid;
  }
    \end{basicblock}
    \begin{basicblock}[backgroundcolor = \color{black!5}]{4}
  return -1;
    \end{basicblock}
    \begin{basicblock}{9}
}
    \end{basicblock}
    \caption{Code with the basic blocks annotated on the right}
    \label{fig:binary-search:code}
  \end{subfigure}
  \begin{subfigure}{0.5\columnwidth}
    \centering
    \begin{tikzpicture}[>= latex,
      every node/.style = {draw, circle, minimum size = 18pt}]
        \node (a) at ( 0,   6) {1};
        \node (b) at ( 0,   5) {2};
        \node (c) at ( 0,   4) {3};
        \node (d) at ( 1,   4) {4};
        \node (e) at (-0.5, 3) {5};
        \node (f) at ( 0.5, 3) {6};
        \node (g) at (-1,   2) {7};
        \node (i) at ( 0.5, 2) {8};
        \node (h) at ( 1,   1) {9};

        \path [->] (a) edge (b);
        \path [->] (b) edge (c);
        \path [->] (b) edge (d);
        \path [->] (c) edge (e);
        \path [->] (c) edge (f);
        \path [->] (e) edge [bend left = 15] (b);
        \path [->] (f) edge (g);
        \path [->] (f) edge (i);
        \path [->] (g) edge [bend left = 30] (b);
        \path [->] (d) edge (h);
        \path [->] (i) edge (h);
    \end{tikzpicture}
    \caption{CFG}
    \label{fig:binary-search:cfg}
  \end{subfigure}%
  \begin{subfigure}{0.50\columnwidth}
    \centering
    \begin{tabular}{l@{\hspace{1.5\tabcolsep}} l}
      1 2 3 5     & 3 6 7 2 4 9 \\
      1 2 3 6 7   & 5 2 3 5     \\
      1 2 3 6 8 9 & 5 2 3 6 7   \\
      1 2 4 9     & 5 2 3 6 8 9 \\
      2 3 5 2     & 6 7 2 3 5   \\
      2 3 6 7 2   & 6 7 2 3 6   \\
      3 5 2 3     & 7 2 3 6 7   \\
      3 5 2 4 9   & 7 2 3 6 8 9 \\
      3 6 7 2 3   & \\
    \end{tabular}
    \caption{Prime paths}
    \label{fig:binary-search:paths}
  \end{subfigure}
  \label{fig:binary-search}
  \caption{A binary search with its control flow graph and 17 prime
    paths, showing that even short and relatively simple functions can
    have many prime paths.  Note that block 9 is the \emph{action} of
    the return, the transfer of control back to the caller.}
\end{figure}

The effectiveness of the compositional method depends on the
structure of the CFG, as it relies on building on the intermediate
solutions from solving the smaller subgraphs inside the strongly
connected components (SCC).  This is ineffective when the component
graph and the CFG are isomorphic, as in
\cref{fig:simple-prime-path}, or when most of the CFG is within
a single component, as in \cref{fig:fazli-bad}.  The former
happens when there are no loops, and the latter when most of the
function is inside a loop.  In both cases GCC essentially falls
back to the algorithm given by Ammann and Offutt.

\begin{figure}
  \centering
  \begin{tikzpicture}[>= latex]
    \graph[
      no placement,
      nodes = {circle, draw, minimum size = 18pt},
    ] {
      a/1 [x =  0, y = 6];
      b/2 [x =  0, y = 5];
      c/3 [x = -1, y = 4];
      d/4 [x = -1, y = 3];
      e/5 [x = -1.5, y = 2];
      f/6 [x = -0.5, y = 2];
      g/7 [x =  0, y = 1];
      i/8 [x = -2, y = 1];
      h/9 [x =  1, y = 4];

      a -> b -> c -> d -> { e, f };
      e -> i -> [bend left=45] d;
      f -> g -> b;
      b -> h;
    };

    \node (scc) [
      draw,
      dashed,
      inner sep = 0.5em,
      rounded corners,
      fit = (b) (c) (d) (e) (i) (g),
    ] {};
    \node [below right] at (scc.north west) {SCC};
  \end{tikzpicture}
  \caption{Most of the CFG is inside a single component.}
  \label{fig:fazli-bad}
\end{figure}
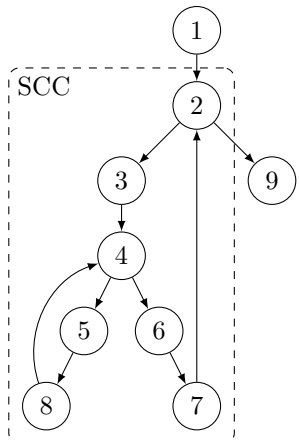

Both algorithms need to filter non-prime paths,
which can be done efficiently with a variant of generalized suffix
trees~\cite{weiner1973, farach1997, ukkonen1995, gusfield1997} over an
integer alphabet of the vertex IDs.  A generalized suffix tree is a
suffix tree for a set of strings and can be constructed as a suffix
tree of the concatenated strings and a special end-of-string
character.  Since prime paths are maximal simple paths or simple cycles
they correspond to the strings in the
suffix tree that do not prefix-match any suffix in the tree.  Most
traditional implementations of suffix trees store offsets into a
string, but the prime paths are generated from the CFG and there is no
explicit string.  It is trivial to construct a string from
candidate paths by concatenation.  Alternatively, the tree can store
substrings directly and not depend on an explicit string. This is
how we implemented the suffix tree in GCC.

\cref{fig:suffix-tree} shows the suffix tree for a simple program and
demonstrates the most important properties: path insertion, subpath
detection, and path reconstruction.  A path $p$ is inserted by
following the path from the root until either $p$ is exhausted or the
current vertex is a leaf, in which case the \emph{final} mark on the leaf
is cleared, the path is extended with the remaining vertices of $p$,
and the new leaf is marked \emph{final}.  The height of the suffix tree is
determined by the \emph{longest} prime path and grows with the length
of the paths, not with the \emph{number} of paths.  A suffix $p' =
\mathbf{tail}~p$ is created and inserted, with the difference that the
final vertex of $p'$ is \emph{not} marked final.  This is repeated
until $p'$ is empty, or until $p'$ fails to create a new vertex, as it
means the remaining suffixes have already been inserted.  Subpath
detection is simple insertion; if a path is extended it is not prime
and the final mark is removed, and if a superpath already existed $p$
will be exhausted before reaching a leaf.  For example, in
\cref{fig:suffix-tree}, inserting the path [1\,4\,2] would not reach
the leaf $4$, and inserting the path [1\,2\,5\,6] would extend
[1\,2\,5] and clear the final mark.  Note that not all leafs are
final; [3\,4\,5] is a suffix of [2\,3\,4\,5] and not prime.  Finally,
there is path reconstruction; prime paths are the paths that end in a
final vertex, and since the tree is \emph{ordered} the paths can be
enumerated in lexicographical order with a depth-first traversal of
the tree.

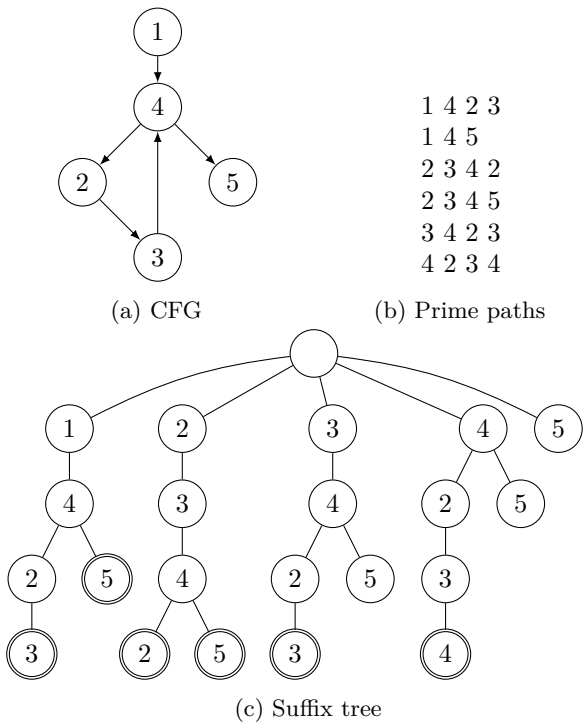
\begin{figure}
  \centering
  \begin{subfigure}[b]{0.5\columnwidth}\centering
    \begin{tikzpicture}[>= latex,
      every node/.style = {circle, draw, minimum size = 18pt}]
        \node (a) at (1, 3) {1};
        \node (d) at (1, 2) {4};
        \node (b) at (0, 1) {2};
        \node (c) at (1, 0) {3};
        \node (e) at (2, 1) {5};

        \path [->] (a) edge (d);
        \path [->] (d) edge (e);
        \path [->] (d) edge (b);
        \path [->] (b) edge (c);
        \path [->] (c) edge (d);
    \end{tikzpicture}
    \caption{CFG}
  \end{subfigure}%
  \begin{subfigure}{0.5\columnwidth}\centering
    \begin{tabular}{l}
      1 4 2 3 \\
      1 4 5   \\
      2 3 4 2 \\
      2 3 4 5 \\
      3 4 2 3 \\
      4 2 3 4 \\
    \end{tabular}
    \caption{Prime paths}
  \end{subfigure}

  \begin{subfigure}{\columnwidth}
    \begin{tikzpicture}[
      every node/.style = {circle, draw, minimum size = 18pt}]

      \node (a) at (3.75, 5) {};

      \node (b1) at (0.5, 4) {1};
      \node (b2) at (2,   4) {2};
      \node (b3) at (4,   4) {3};
      \node (b4) at (6,   4) {4};
      \node (b5) at (7,   4) {5};

      \node (c1) at (0.5, 3) {4};
      \node (c2) at (2,   3) {3};
      \node (c3) at (4,   3) {4};
      \node (c4) at (5.5, 3) {2};
      \node (c5) at (6.5, 3) {5};

      \node (d1) at (0,   2) {2};
      \node [double] (d2) at (1,   2) {5};
      \node (d3) at (2,   2) {4};
      \node (d4) at (3.5, 2) {2};
      \node (d5) at (4.5, 2) {5};
      \node (d6) at (5.5, 2) {3};

      \node [double] (e1) at (0,   1) {3};
      \node [double] (e2) at (1.5, 1) {2};
      \node [double] (e3) at (2.5, 1) {5};
      \node [double] (e4) at (3.5, 1) {3};
      \node [double] (e5) at (5.5, 1) {4};

      \path (a) edge [bend right = 10] (b1);
      \path (a) edge (b2);
      \path (a) edge (b3);
      \path (a) edge (b4);
      \path (a) edge [bend left = 10] (b5);

      \path (b1) edge (c1);
      \path (b2) edge (c2);
      \path (b3) edge (c3);
      \path (b4) edge (c4) edge (c5);

      \path (c1) edge (d1) edge (d2);
      \path (c2) edge (d3);
      \path (c3) edge (d4) edge (d5);
      \path (c4) edge (d6);

      \path (d1) edge (e1);
      \path (d3) edge (e2) edge (e3);
      \path (d4) edge (e4);
      \path (d6) edge (e5);
    \end{tikzpicture}
    \caption{Suffix tree}
  \end{subfigure}
  \caption{A CFG, its prime paths, and the suffix tree after all
    simple paths have been inserted.  Every path from the root to a
    \emph{final} (double) vertex is a prime path, and the ordered
    set of prime paths can be found by an ordered depth-first
    traversal of the tree left to right.  Testing for any subpath is
    done with insertion: if no new vertices are created, the path is a
    subpath.}
  \label{fig:suffix-tree}
\end{figure}

Inserting a path $P$ in the suffix tree is done by inserting the
suffixes $P[1..n], P[2..n], \ldots, P[n]$ where $n = |P|$. Given $m$
prime path candidates we need find the suffix tree, and by extension
set of prime paths and eliminate redundant subpaths, in
$O(n^2m)$. This is an improvement on the pruning step of the algorithm
by Fazli et al~\cite{fazli2019, fazli2022}, which filters using a
nested loop over the set of candidates in $O(n^2m^2)$.  Note that
paths tend to be short and $n$ small; but the \emph{number} of
candidates $m$ grows very fast.

The path explosion problems makes it a practical necessity to limit analysis on
programs that are too complex, i.e.\ programs with too many paths.  The
cost of enumerating the prime paths and emitting instructions grows
with the number of prime paths, as well as the compiled object size.
While there are techniques for
estimating the number of $s-t$ paths~\cite{roberts2007} that could be
used for estimating the prime path count, we employ a simple
pessimistic heuristic; we maintain a running count of the number of paths
and abort whenever it exceeds the given threshold.  The heuristic is
pessimistic as it counts inserts into the suffix tree without applying
corrections for when paths are later subsumed.  This is a pragmatic
and fast solution that slightly over-counts paths and adds very little
overhead, only an increment and a limits check, but might stop analysis on
programs where the total path count is just under the threshold.  The
threshold default is quite high, 250000, and can be set by the user
with a flag.  This is deemed an acceptable solution to the path
explosion problem given the remote likelihood of any tester wanting to
write that many test cases.  As with any complexity problem, the
better solution may be to refactor the program.

  \section{Instrumentation}
\label{sec:instrumentation}
It is necessary to keep track of both completed paths and partially
taken paths.  This can be done efficiently using only a few bitwise
operations, similar to how GCC measures MC/DC~\cite{kvalsvikmcdc2025}.
The prime paths can be ordered lexicographically and a numerical
identifier is assigned to each prime path based on its index in the
ordered set.  For each function, we add the persistent set $P$, which
will be initialized with an empty set the first time the program is run
in an auxiliary file called the counts or .gcda file.  When the
program is run it will, in the function prelude, initialize the
function-local function set $L$.  Each CFG vertex $v$ is extended with
three steps: recording/flushing, discarding, and initializing, in the
order listed below.

\begin{description}
\item[Recording] is updating the persistent counters $P$ in the .gcda file
  with some of the paths in $L$.  There may be more than one
  path which ends in the vertex $v$, and there may be paths that go
  through $v$ which should not be recorded.  $R(v)$ is the set of
  paths that end, and should be recorded, in $v$.

\item[Discarding] is removing the diverged-from paths from the
  candidate set $L$ when taking an edge.  $D(v)$ is the set of paths
  that should be discarded in $v$.

\item[Initializing] is starting the tracking of paths that begin in
  $v$ by adding them to the current candidate set $L$.  $I(v)$ is the set
  of paths that start, and should be initialized, in $v$.
\end{description}

\begin{figure}
  \begin{lstlisting}
P.add(intersection(R(3), L))
L.remove(D(3))
L.add(I(3))
val = getcwd (buffer, size)
if (val != 0)
  \end{lstlisting}
  \caption{Basic block $3$ from \cref{fig:gnu_getcwd} and
    \cref{fig:gnu_getcwd:cfg} with prime path recording and tracking
    as functions on sets.}
  \label{fig:block-with-set-ops}
\end{figure}

These steps are set operations executed immediately upon entering the
vertex $v$. \textbf{Recording} is $P \gets P + (L \cap R(v))$,
\textbf{discarding} is $L \gets L - D(v)$, and
\textbf{initializing} is $L \gets L + I(v)$.  For example, for the
function \lstinline{gnu_getcwd} in \cref{fig:gnu_getcwd}, the vertex
$3$ will be transformed as in \cref{fig:block-with-set-ops}.
Note that for loops the recording and initialization will happen in
the same vertex.  This is fine; since $L$ is initialized to the
empty set, $L \cap R(v)$ will be empty on the first visit of $v$ and
adding it to $P$ becomes a no-op.

The paths and vertices in these examples are taken from
\cref{fig:gnu_getcwd}, and the full table of $R(v)$, $D(v)$ and $I(v)$
is shown in \cref{fig:gnu_getcwd:functions}.  Prime paths are recorded
when entering the last vertex, e.g.\ $R(7) = \set{P_1, P_5}$, so $P_1$
and $P_5$ are recorded when entering vertex $7$.  Prime paths are
discarded when they contain the predecessor $p$ but not the vertex
$v$.  For example, $D(4) = \set{P_1, P_5}$ as there is an edge $(3,4)$
and the vertex $4$ is not in $P_1$ and has no predecessor in $P_5$.
Finally, paths are initialized when entering the first vertex, so
$I(4) = \set{P_5, P_6}$.  These examples demonstrate why the order
listed above is important for these operations. If discarding
happened before recording, $P_5$ would never be covered as it is both
discarded and recorded in $4$. Similarly, if initialization happened
before discarding then $P_5$ would be initialized in $4$ before being
immediately discarded.

\begin{figure}
  \centering
  \begin{basicblock}{}
void *gnu_getcwd () {
  \end{basicblock}
    \begin{basicblock}{1}
  int size = 100;
  void *buffer = alloc (size);
    \end{basicblock}
    \begin{basicblock}[backgroundcolor = \color{black!5}]{8}
  while (1) {
    \end{basicblock}
    \begin{basicblock}{2}
    void *val = getcwd (buffer, size);
    \end{basicblock}
    \begin{basicblock}[backgroundcolor = \color{black!5}]{3}
    if (val != 0)
    \end{basicblock}
    \begin{basicblock}{5}
      return buffer;
    \end{basicblock}
    \begin{basicblock}[backgroundcolor = \color{black!5}]{4}
    size *= 2;
    release (buffer);
    \end{basicblock}
    \begin{basicblock}{6}
    buffer = alloc (size);
    \end{basicblock}
    \begin{basicblock}[backgroundcolor = \color{black!5}]{8}
  }
    \end{basicblock}
    \begin{basicblock}{7}
}
    \end{basicblock}
  \caption{\lstinline{gnu_getcwd} from unix-tree
    2.0.4~\cite{unix-tree} with the basic blocks annotated on the
    right.}
  \label{fig:gnu_getcwd}
\end{figure}

\begin{figure}
  \centering
  \begin{subfigure}[b]{0.5\columnwidth}\centering
    \begin{tikzpicture}[> = latex]
      \graph[
        no placement,
        nodes = {circle, draw, minimum size = 18pt},
      ] {
        a/1 [x =  0, y = 5];
        b/2 [x =  0, y = 4];
        c/3 [x = -1, y = 3];
        e/4 [x = -1, y = 2];
        d/5 [x = -2, y = 2];
        f/6 [x = -1, y = 1];
        h/7 [x = -2, y = 1];
        g/8 [x =  0, y = 0];

        a -> b -> c -> { e, d };
        d -> h;
        e -> f -> g -> b;
      };
    \end{tikzpicture}
    \caption{CFG}
    \label{fig:gnu_getcwd:cfg}
  \end{subfigure}%
  \begin{subfigure}[b]{0.5\columnwidth}\centering
    \begin{tabular}{c l}
       $P_1$ & 1 2 3 5 7 \\
       $P_2$ & 1 2 3 4 6 8 \\
       $P_3$ & 2 3 4 6 8 2 \\
       $P_4$ & 3 4 6 8 2 3 \\
       $P_5$ & 4 6 8 2 3 5 7 \\
       $P_6$ & 4 6 8 2 3 4 \\
       $P_7$ & 6 8 2 3 4 6 \\
       $P_8$ & 8 2 3 4 6 8 \\
    \end{tabular}
    \caption{Prime paths}
    \label{fig:gnu_getcwd:paths}
  \end{subfigure}

  \begin{subfigure}{\columnwidth}
    \rowcolors{1}{}{black!5}
    \begin{tabularx}{\columnwidth}{c *{8}{X}}
      & 1 & 2  & 3  &  4 & 5 & 6  & 7  & 8  \\
$P_1$ & I &    &    &  D &   &    & R  &    \\
$P_2$ & I &    &    &    & D &    &    & R  \\
$P_3$ &   & RI &    &    & D &    &    &    \\
$P_4$ &   &    & RI &    & D &    &    &    \\
$P_5$ &   &    &    & DI &   &    & R  &    \\
$P_6$ &   &    &    & RI & D &    &    &    \\
$P_7$ &   &    &    &    & D & RI &    &    \\
$P_8$ &   &    &    &    & D &    &    & RI \\
    \end{tabularx}
    \caption{}
    \label{fig:gnu_getcwd:functions}
  \end{subfigure}
  \caption{Instrumenting \lstinline{gnu_getcwd}
    (\cref{fig:gnu_getcwd}).  The table in \cref{fig:gnu_getcwd:functions}
    shows which paths are recorded (R), discarded (D), and initialized
    (I) when the vertex is visited.}
  \label{fig:getcwd}
\end{figure}
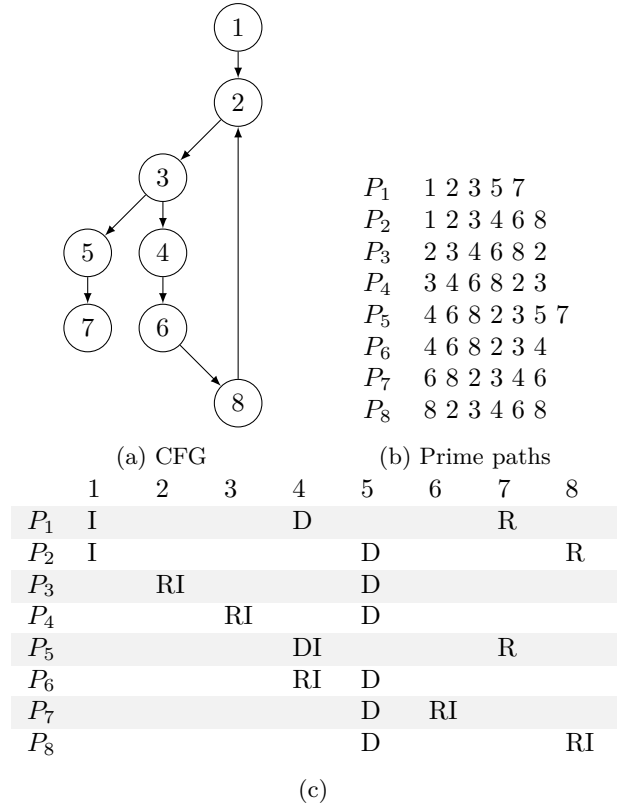

\begin{figure}
  \centering
  \rowcolors{1}{}{black!5}
  \begin{tabularx}{\columnwidth}{c *{3}{>{\centering\arraybackslash}X}}
    $v$ & $B_R(v)$ & $B_D(I)$ & $B_I(v)$ \\
    1 & \ttfamily{00000000} & \ttfamily{00000000} & \ttfamily{00000011} \\
    2 & \ttfamily{00000100} & \ttfamily{00000000} & \ttfamily{00000100} \\
    3 & \ttfamily{00001000} & \ttfamily{00000000} & \ttfamily{00001000} \\
    4 & \ttfamily{00100000} & \ttfamily{00010001} & \ttfamily{00110000} \\
    5 & \ttfamily{00000000} & \ttfamily{11101110} & \ttfamily{00000000} \\
    6 & \ttfamily{01000000} & \ttfamily{00000000} & \ttfamily{01000000} \\
    7 & \ttfamily{00010001} & \ttfamily{00000000} & \ttfamily{00000000} \\
    8 & \ttfamily{10000010} & \ttfamily{00000000} & \ttfamily{10000000} \\
  \end{tabularx}
  \caption{Bitset representation of \cref{fig:gnu_getcwd:functions}}
  \label{fig:bitmasks}
\end{figure}

With the set of prime paths enumerated, the
functions over sets can be efficiently implemented with bitwise
operations as follows; let $B$ be the bitset representation of the set
$P$ where the the $n$th bit $B(n)$ corresponds to the path $P_n$, and
$B_R(v), B_D(v), B_I(v)$ maps to $R(v), D(v), I(v)$ respectively.
We have that $B_R(n) = 1 \textrm{ if } P_n \in R(n)$, $B_D(n) = 1
\textrm{ if } P_n \in D(n)$, and $B_I(n) = 1 \textrm{ if } P_n \in
I(n)$.  All other bits are set to $0$.  A complete table of the bitset
representations of the functions in \cref{fig:gnu_getcwd:functions} is
shown in \cref{fig:bitmasks}.  $B_L$ and $B_P$ are both initialized to
all zeros.  Note that for the bitwise operations it is practical to
invert $B_D$ so applying it as a bitmask \emph{preserves}
non-discarded paths.

We can translate the set functions to work on bitsets, so that for
each vertex $v$;
\begin{enumerate*}[
  label = (\arabic*),
  itemjoin = {{; }},
  itemjoin* = {{, and }}]
  \item \textbf{record} $P \gets P + (L \cap R(v))$ becomes
    \lstinline{P = P | (L & R[v])}
  \item \textbf{discard} $L \gets L - D(v)$ becomes
    \lstinline{L = L & ~D[v]}
  \item \textbf{initialize} $L \gets L + I(v)$ becomes
    \lstinline{L = L | I[v]}.
\end{enumerate*}
Full coverage is achieved when all bits in $P$ are set.

To reduce the runtime and size overhead of instrumentation, certain
unnecessary operations are eliminated.  For example, if no paths are
discarded in $v$ then $B_D(v) = \emptyset{}$, which means $L = L -
B_D(v)$ will have no effect and we do not need to emit instructions for
updating $L$.  The
same techniques apply to the recording and initializing steps.  It can
be seen from the sparseness of the table in
\cref{fig:gnu_getcwd:functions} that many operations can be elided in
practice; instructions have to be emitted if $R(v)$, $D(v)$, or $I(v)$ is
non-empty, but only the instruction corresponding to that step.  For
example, the vertex $4$ will be extended with all the steps, $2$ would
only need \textbf{record} and \textbf{discard}, and $5$ only needs a
single \textbf{discard} instruction to discard multiple prime paths A
complete example of \lstinline{gnu_getcwd} with instrumentation as if
it was written in C is shown in \cref{fig:gnu_getcwd:annotated}.

The number of prime paths of a function is usually much larger than the
native word size or instruction operand sizes.  In GCC 14, released in
2024, the size of the gcov data type used in the .gcda file is 32 or 64
bits, depending on the target architecture.  The bitset of $n$ bits
can be partitioned into $k = \lceil\frac{n}{w}\rceil$ bins of $w$
bits, where $w$ is the number of bits in the gcov type.  It is not
necessary to emit instructions to update bins that are not affected.
For example, for the \lstinline{gnu_getcwd} in \cref{fig:bitmasks}
assuming $w = 4$ would split all bitsets in two.  To perform the
initialization \lstinline{L = L | I[1]} where
\lstinline{I[1] = [0000, 0100]} it is sufficient to only apply the
bitwise-or to the lower half.  This optimization can \emph{greatly}
reduce the size of the compiled object files and improve runtime
performance; in the the tree.c file in unix-tree
2.0.4~\cite{unix-tree}, each vertex typically interacted with up to
10\% of the paths.

\begin{figure}
  \centering
  \begin{basicblock}{}
extern uint P;
void *gnu_getcwd () {
  \end{basicblock}
  \begin{basicblock}{1}
  uint L = 0;
  L |= 00000011;
  int size = 100;
  void *buffer = alloc (size);
  \end{basicblock}
  \begin{basicblock}[backgroundcolor = \color{black!10}]{8}
  while (1) {
  \end{basicblock}
  \begin{basicblock}{2}
    P |= L & 00000100;
    L |= 00000100;
    void *val = getcwd (buffer, size);
  \end{basicblock}
  \begin{basicblock}[backgroundcolor = \color{black!10}]{3}
    P |= L & 00001000;
    L |= 0001000;
    if (val != 0)
  \end{basicblock}
  \begin{basicblock}{5}
      L &= ~11101110;
      goto _return;
  \end{basicblock}
  \begin{basicblock}[backgroundcolor = \color{black!10}]{4}
    P |= L & 00100000;
    L &= ~00010001;
    L |= 00110000;
    size *= 2;
    release (buffer);
  \end{basicblock}
  \begin{basicblock}{6}
    P |= L & 01000000;
    L |= 01000000;
    buffer = alloc (size);
  \end{basicblock}
  \begin{basicblock}[backgroundcolor = \color{black!10}]{8}
    P |= L & 10000010;
    L |= 10000100;
  }
  \end{basicblock}
  \begin{basicblock}{7}
_return:
  P |= L & 00010001;
  return buffer;
}
  \end{basicblock}

  \caption{\lstinline{gnu_getcwd} with instrumentation as-if it was
    written in C.  \lstinline{P} is the persistent bitset, and the
    constant bitmasks are taken from the table in \cref{fig:bitmasks}.
    The label on each block is the is the CFG vertex ID as shown in
    \cref{fig:gnu_getcwd} and \cref{fig:gnu_getcwd:cfg}.}
  \label{fig:gnu_getcwd:annotated}
\end{figure}

  \section{Reporting coverage}
\label{sec:reporting}
The coverage report is printed by the \textbf{gcov} program, which
produces a report from the auxiliary notes and counts files \cite{gcc14-gcov}.  The
notes file, with the extension .gcno, is created by GCC when the
program is compiled, and the counts file, with the extension .gcda, is
created and updated by the instrumented program.  The notes file
stores information about the CFG, function names, line information,
etc., and the counts file store the counters and coverage
measurements.  The prime paths are not stored explicitly in the notes
file as it would be very large, but recomputed from the recorded CFG.
If computing the prime paths for a function was aborted due to
exceeding the path count threshold it is marked as such and gcov will
not attempt to recompute the prime paths for that function.  By using
this information gcov can accurately report on prime path coverage and
describe precisely how to cover the uncovered prime paths.
\cref{fig:report} shows an excerpt of a report on the
\lstinline{gnu_getcwd} function. This is the path $P_2$ = [1 2 3 4 6
8] in \cref{fig:gnu_getcwd:paths}, and gcov prints the lines of source
source, and the basic block associated with it, in the order they must
be executed to cover the prime path. GCC also offers a condensed
line-oriented format intended for machine processing.

\begin{figure}
  \centering
  \begin{lstlisting}[
      language = C,
      basicstyle = \ttfamily\footnotesize,
      linewidth = \columnwidth,
      breaklines = true,
    ]
path 1 not covered:
BB 2:         5:void *getcwd (void *, int)
BB 2:         7:  int size = 100;
BB 2:         8:  void *buffer = alloc (size);
BB 3:        11:    void *value = getcwd (buffer, size);
BB 4:(false) 12:    if (value != 0)
BB 6:        14:    size *= 2;
BB 6:        15:    release (buffer);
BB 7:        16:    buffer = alloc (size);
BB 8:        10:  while (1) {
  \end{lstlisting}
  \caption{Path coverage report for \lstinline{gnu_getcwd} the prime
    path $P_2$ = [1 2 3 4 6 8] in \cref{fig:gnu_getcwd:paths}. The
    columns are the basic block IDs, the edge/transition kind
    (decision), and the source code.  Note that the basic blocks start
    at 2 as GCC reserves 0 and 1 for the entry- and exit blocks.}
  \label{fig:report}
\end{figure}

  \section{Conclusion and future work}
\label{sec:future-work}
This paper describes the implementation by the author of the prime
path coverage support in GCC.  We improve on the algorithms in this
space by utilizing a suffix tree, which ensures fast removal of
subsumed paths as well as providing a compact representation of the
prime paths of a graph.  The instrumented program is reasonably
efficient; it only needs 1 bit per prime path, and the runtime
bookkeeping only needs a few fast bitwise operations.

Finding the prime paths of a graph is an actively researched
topic~\cite{fazli2019, fazli2022}, and improved algorithms would relax the
current limitations and increase path count threshold and support more
complex functions.  Notably, GCC performs worse on graphs with large
SCCs.  Still, the practical limit is not computing the prime paths;
the major slowdown are the later passes where GCC processes the extra
instructions emitted by the instrumentation.  Improvements in later
compiler passes would greatly raise the prime path threshold.

Ball and Larus~\cite{ball1996} show applications of \emph{path
profiling}, which can guide optimization similar to how GCC already
uses edge- and block frequencies in profile-guided optimization.  For
both space and time efficiency the coverage instrumentation uses
bitsets, but can be extended to also record path frequencies.

Our implementation uses a very simple heuristic to determine when
coverage is too expensive, which is inaccurate as an upper bound, and
computationally intensive as it enumerates prime paths up to the
threshold.  Good heuristics and estimation of graph complexity could
be designed to efficiently reject functions.  Fast approximations,
even with reduced accuracy, would improve the responsiveness of the
compiler when faced with large complex programs with many functions
that exceed the threshold.

Prime path coverage is not yet a widely used metric in the industry.
While we have found prime path coverage to be an excellent tool to
evaluate the complexity of code and as driver on where to spend effort
during unit testing, further experiments could measure its
effectiveness at finding defects, cost/benefit ratio, and relationship
with testing the functional requirements.

  \printbibliography
\end{document}